\newtheorem{theorem}{Theorem}
\newtheorem{proof}{Proof}
\begin{document}
\runninghead{J.~Kong, S.~Wang and G.~Wahba}

\title{Using distance covariance for improved variable selection with application to genetic risk models}

\author{Jing Kong\affil{a}\corrauth,
Sijian Wang\affil{b} and Grace Wahba\affil{c}}

\address{\affilnum{a}Department of Statistics, 1300 University Avenue, Madison, WI, 53706, US\\
\affilnum{b}Department of Statistics and Department of Biostatistics $\&$ Medical Informatics, 1300 University Avenue, Madison, WI, 53706, US\\
\affilnum{c}Department of Statistics, Department of Computer Sciences and Department of Biostatistics $\&$ Medical Informatics,1300 University Avenue, Madison, WI, 53706, US}

\corraddr{E-mail: kong@stat.wisc.edu}

\cgs{J.K. and G.W. are supported by National Science Foundation (NSF)
  Grant DMS1308877 and  National Institutes of Health (NIH) Grant EY09946. S.W is supported by NIH Grant 1R01HG007377.}

\begin{abstract}
Variable selection is of increasing importance to address the difficulties of high dimensionality in many scientific areas. In this paper, we demonstrate a property for distance covariance, which is incorporated in a novel feature screening procedure together with the use of distance correlation. The approach makes no distributional assumptions for the variables and does not require the specification of a regression model, and hence is especially attractive in variable selection given an enormous number of candidate attributes without much information about the true model with the response. The method is applied to two genetic risk problems, where issues including uncertainty of variable selection via cross validation, subgroup of hard-to-classify cases and the application of a reject option are discussed.
\end{abstract}

\keywords{distance correlation, variable selection, SVM with reject option, TCGA ovarian cancer data, penalized Bernoulli likelihood}
\maketitle

\section{Introduction}
The idea of feature screening came along as high dimensional data were collected in modern technology. It was aimed at dealing with the challenges of computational expediency, statistical accuracy and algorithmic stability due to high dimensionality. Fan and Lv proposed the sure independence screening (SIS) \cite{fan2008sure} and showed that the Pearson correlation ranking procedure possessed a sure screening property for linear regression with Gaussian predictors and responses. A new feature screening procedure for high dimensional data based on distance correlation \cite{szekely2007measuring}, named DC-SIS, was presented in \cite{li2012feature}. DC-SIS retained the sure screening property of the SIS, and additionally possessed new advantages of handling grouped predictors and multivariate responses by using distance correlation. Moreover, since distance correlation was applicable to arbitrary distributions, DC-SIS could also be used for screening features without specifying a regression model between the response and the predictors, and thus was robust to model mis-specification.

However, both SIS and DC-SIS relied on a user-specified model size $d$ which decided the number of predictors being selected. Let the sample size be $n$, $d$ was chosen to be multipliers of the integer part of $n/\log n$ in \cite{fan2008sure} and \cite{li2012feature} which did not depend on any other characteristics of the data. As pointed out by a referee of \cite{li2012feature}, the choice of $d$ was of great importance in practical implementation and might influence the screening results significantly. Our study is aimed at fixing this shortcoming by including an automatic stopping criteria for DC-SIS based on the property of distance covariance.

The screening procedures may fail if a feature is marginally uncorrelated, but jointly correlated with the response, or in the reverse situation where a feature is jointly uncorrelated but has higher marginal correlation than some important features. An iterative SIS was proposed in \cite{fan2008sure} to fix this problem. Current research interest involves dealing with this drawback but this work is not related to this quest.

We demonstrate our improved method through two real examples. The small round blue cell tumors (SRBCT) data were relatively easy to classify and had been studied extensively. The Cancer Genome Atlas (TCGA) ovarian cancer data, however, were much more challenging due to the large number of genes and limited sample size. The target was to identify the important genes that contribute to the sensitive or resistant status after receiving a particular chemotherapy treatment. A substantial fraction of the population was difficult to classify and a ``withholding decision" option is allowed in the support vector machine with reject option model to adapt to this fact. A multiple cross validation is used to quantify uncertainty given a humongous number of candidates, and we see a commonly observed dilemma that different variables are selected by using different subsets of the data. Comparison between the results from the original data and those from the data obtained by randomly permuting the response provide further justification on our conclusions. Furthermore, the multiple cross validation on the permuted data discloses the existence of spuriously correlated variables in high dimensional data and thus failure of variable selection and model building based on training data.

\section*{Some Preliminaries}
\subsection*{Distance correlation}
\cite{szekely2007measuring} proposed distance correlation as a measurement of dependence between two random vectors. The method has been successfully applied to various problem, see \cite{kong2012using} for example. To be specific, the authors defined the distance covariance between $X\in\mathbb{R}^p$ and $Y\in\mathbb{R}^q$ to be
\begin{displaymath}
\operatorname{V}^2(X,Y)= \frac{1}{c_p c_q}\int_{\mathbb{R}^{p+q}} \frac{\left| f_{X,Y}(s, t) - f_X(s)f_Y(t) \right|^2}{|s|_p^{1+p} |t|_q^{1+q}} dt\,ds
\end{displaymath}
where $f_{X, Y}(s, t), f_X(s)$, and $f_Y(t)$ are the characteristic functions of $(X, Y), X$, and $Y$ respectively, and $c_p$, $c_q$ are constants chosen to produce scale free and rotation invariant measure that doesn't go to zero for dependent variables. The idea is originated from the property that the joint characteristic function factorizes under independence of the two random vectors. This leads to the remarkable property that $V^2(X,Y)=0$ if and only if $X$ and $Y$ are independent.

The sample version of distance covariance and distance correlation involves pairwise distances. For a random sample $(X,Y)=\{(X_k, Y_k): k=1,...,n\}$ of $n$ i.i.d random vectors $(X,Y)$ from the joint distribution of random vectors $X$ in $\mathbb{R}^p$ and $Y$ in $\mathbb{R}^q$, the Euclidean distance matrices $(a_{ij})=(|X_i-X_j|_p)$ and $(b_{ij})=(|Y_i-Y_j|_q)$ with $i,j = 1,\ldots,n$ are computed. Define the double centering distance matrices
\begin{displaymath}
A_{ij}=a_{ij}-{\overline{a}_{i\cdot}}-{\overline{a}_{\cdot j}}+{\overline{a}_{\cdot\cdot}},\hspace{0.3cm}i,j=1,\ldots,n,
\end{displaymath}
where
\begin{displaymath}
{\overline{a}_{i\cdot}}=\frac{1}{n}\sum_{j=1}^na_{ij},\hspace{0.3cm}{\overline{a}_{\cdot j}}=\frac{1}{n}\sum_{i=1}^na_{ij},\hspace{0.3cm}{\overline{a}_{\cdot\cdot}}=\frac{1}{n^2}\sum_{i,j=1}^na_{ij},
\end{displaymath}
similarly for $B_{ij}=b_{ij}-{\overline{b}_{i\cdot}}-{\overline{b}_{\cdot j}}+{\overline{b}_{\cdot\cdot}},\hspace{0.3cm}i,j=1,\ldots,n$. Then, the sample distance covariance $V_n(X,Y)$ is defined by
\begin{displaymath}
V_n^2(X,Y)=\frac{1}{n^2}\sum_{i,j=1}^nA_{ij}B_{ij}.
\end{displaymath}
The sample distance correlation $R_n(X,Y)$ is defined by
\begin{equation}
R_n^2(X,Y)=\begin{cases}\displaystyle \frac{V_n^2(X,Y)}{\sqrt{V_n^2(X)V_n^2(Y)}}, & V_n^2(X)V_n^2(Y)>0; \\
0, &V_n^2(X)V_n^2(Y)=0,\end{cases} \notag
\end{equation}
where the sample distance variance is defined by
\begin{displaymath}
V_n^2(X)=V_n^2(X,X)=\frac{1}{n^2}\sum_{i,j=1}^nA_{ij}^2.
\end{displaymath}

\subsection*{Feature screening via distance correlation (DC-SIS)}
\cite{fan2008sure} proposed sure independence screening (SIS) procedure based on the Pearson correlation for feature selection. The distance correlation version of this technique (DC-SIS) was studied in \cite{li2012feature}. With a user-specific model size $d$, the variables whose distance correlations with the response ranking from 1st to $d$th in decreasing order were selected. The authors explored the theoretic properties of the DC-SIS and proved that the DC-SIS kept the desired sure screening property established in \cite{fan2008sure}. Moreover, due to the property of distance correlation, DC-SIS procedure was robust to model mis-specification, which was demonstrated in their simulations.

\section*{Improving DC-SIS using distance covariance}
\begin{theorem}Suppose random vectors $X,Z\in\mathbb{R}^p$ and $Y\in\mathbb{R}^q$, and assume $Z$ is independent of $(X,Y)$, then
\begin{equation}
V^2(X+Z,Y)\leq V^2(X,Y),
\end{equation}
where $V$ is the population distance variance defined in \cite{szekely2007measuring}.
\end{theorem}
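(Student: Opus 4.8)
The plan is to work directly with the integral representation of $V^2$ in terms of characteristic functions and to exploit the fact that adding an independent summand multiplies the relevant characteristic functions by a common factor of modulus at most one.

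First I would record how the characteristic functions transform under the perturbation. Because $Z$ is independent of $(X,Y)$, for every $s\in\mathbb{R}^p$ and $t\in\mathbb{R}^q$ one has $f_{X+Z,Y}(s,t)=f_{X,Y}(s,t)f_Z(s)$ and $f_{X+Z}(s)=f_X(s)f_Z(s)$, while $f_Y(t)$ is unchanged. Substituting these into the numerator of the integrand defining $V^2(X+Z,Y)$ gives
\[
f_{X+Z,Y}(s,t)-f_{X+Z}(s)f_Y(t)=f_Z(s)\left(f_{X,Y}(s,t)-f_X(s)f_Y(t)\right),
\]
so that $\left|f_{X+Z,Y}(s,t)-f_{X+Z}(s)f_Y(t)\right|^2=|f_Z(s)|^2\left|f_{X,Y}(s,t)-f_X(s)f_Y(t)\right|^2$ for all $s,t$.

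Next I would invoke the elementary bound $|f_Z(s)|=\left|\mathbb{E}\,e^{i\langle s,Z\rangle}\right|\le\mathbb{E}\left|e^{i\langle s,Z\rangle}\right|=1$, valid for every $s$. Since the weight $1/(|s|_p^{1+p}|t|_q^{1+q})$ and the normalizing constants $c_p,c_q$ are identical in the two integrals, the integrand for $V^2(X+Z,Y)$ is pointwise dominated by the integrand for $V^2(X,Y)$; integrating over $\mathbb{R}^{p+q}$ then yields $V^2(X+Z,Y)\le V^2(X,Y)$.

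I do not anticipate a real obstacle here. The characteristic-function identities are immediate from independence, and the integrability questions are precisely those already handled in \cite{szekely2007measuring} for the representation quoted in the preliminaries, so the pointwise-domination argument introduces no new convergence concerns. As a side remark, equality holds exactly when $|f_Z(s)|=1$ for almost every $s$ relevant to the integral, i.e. essentially when $Z$ is almost surely constant; this pinpoints when the dependence signal that $X$ carries toward $Y$ is left fully intact after the perturbation.
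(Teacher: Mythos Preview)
Your argument is correct and matches the paper's proof essentially step for step: factor the characteristic functions using independence of $Z$ from $(X,Y)$, pull out $|f_Z|^2$, bound it by $1$, and integrate. The only addition is your remark on the equality case, which the paper does not discuss.
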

\begin{proof}
\begin{align}
V^2(X+Z,Y)&= \parallel f_{X+Z,Y}(t,s) - f_{X+Z}(t)f_{Y}(s)\parallel^2 \notag\\
&=\frac{1}{c_pc_q}\int_{\mathcal{R}^{p+q}}\frac{1}{|t|_p^{1+p}|s|_q^{1+q}}|f_{X+Z,Y}(t,s) - f_{X+Z}(t)f_{Y}(s)|^2dtds. \notag
\end{align}
The following fact follows from the definition of characteristic function and independence assumption.
\begin{align}
&|f_{X+Z,Y}(t,s) - f_{X+Z}(t)f_{Y}(s)|^2 \notag\\
=&|Ee^{it^T(X+Z)+is^TY} - Ee^{it^T(X+Z)}Ee^{is^TY}|^2 \notag \\
=&|Ee^{it^TX+is^TY}Ee^{it^TZ} - Ee^{it^TX}Ee^{it^TZ}Ee^{is^TY}|^2\notag\\
=&|f_{X,Y}(t,s)f_Z(t) - f_{X}(t)f_Z(t)f_{Y}(s)|^2\notag\\
=&|f_Z(t)|^2|f_{X,Y}(t,s) - f_{X}(t)f_{Y}(s)|^2, \notag
\end{align}
Since $|f_Z(t)|\leq 1$ by the property of characteristic function\footnote{In \cite{kosorok2009discussion}, the author obtained equality here which is incorrect.}, we have
\begin{equation}
|f_{X+Z,Y}(t,s) - f_{X+Z}(t)f_{Y}(s)|^2 \leq |f_{X,Y}(t,s) - f_{X}(t)f_{Y}(s)|^2, \notag
\end{equation}
which implies
\begin{align}
V^2(X+Z,Y)&\leq\frac{1}{c_pc_q}\int_{\mathcal{R}^{p+q}}\frac{1}{|t|_p^{1+p}|s|_q^{1+q}}|f_{X,Y}(t,s) - f_{X}(t)f_{Y}(s)|^2dtds\notag\\
&=\parallel f_{X+Z,Y}(t,s) - f_{X+Z}(t)f_{Y}(s)\parallel^2 \notag\\
&=V^2(X,Y). \notag
\end{align}
\end{proof}

We know that if $E|X|_p<\infty, E|X+Z|_p<\infty$ and $E|Y|_p<\infty$, then almost surely
\begin{align}
\displaystyle \lim_{n\rightarrow\infty}V_n^2(X+Z,Y)&=V^2(X+Z,Y),\notag\\
\displaystyle \lim_{n\rightarrow\infty}V_n^2(X,Y)&=V^2(X,Y).\notag
\end{align}
Thus, for the sample distance covariance, if $n$ is large enough, we should have
\begin{equation}
V_n^2(X+Z,Y)\leq V_n^2(X,Y), \notag
\end{equation}
under the assumption of independence between $(X,Y)$ and $Z$.

In the case where $(X,Z)$ is of interest, which is the usual situation for variable selection setting, we could use the above theorem by incorporating degenerated random vectors as follows. Suppose $X\in\mathbb{R}^{p_1}$ and $Z\in\mathbb{R}^{p_2}$, then we augment $X$ and $Z$ to be $\tilde{X}=(X,0_{p_2})$ and $\tilde{Z}=(0_{p_1},Z)$ respectively. $\tilde{X}$ and $\tilde{Z}$ are therefore of the same dimension and $\tilde{X} + \tilde{Z} = (X,Z).$

We implemented the above theorem as a check for stopping for DC-SIS. For the original DC-SIS, it required a user-specified model size $d$, which was always chosen as multipliers of the integer part of $n/\log n$. For our improved screening procedure with distance correlation, we first ranked the importance of $x_i, i = 1,...,p$ using the marginal distance correlations with the response as DC-SIS did and initialized $\mathcal{S}$ as the singleton including the index of the top one variable. Instead of selecting the top $d$ variables, we kept adding variables to $x_\mathcal{S} = \{x_i:i\in\mathcal{S}\}$ according to the ordered list of variables until observing a decrease in the distance covariance between $x_\mathcal{S}$ and $y$. The procedure took the following steps and we denoted the procedure as DCOV method.
\begin{enumerate}
\item Calculate marginal distance correlations for $x_i, i = 1,...,p$ with the response.
\item Rank the variables in decreasing order of the distance correlations. Denote the ordered variables as $x_{(1)}, x_{(2)},...,x_{(p)}$. Start with $x_{\mathcal{S}} = \{x_{(1)}\}$.
\item For $i$ from $2$ to $p$, keep adding $x_{(i)}$ to $x_{\mathcal{S}}$ if $V_n^2(x_{\mathcal{S}},y)$ does not decrease. Stop otherwise.
\end{enumerate}

\section*{Real application on SRBCT data}
The small round blue cell tumors (SRBCTs) are 4 different childhood tumors named so because of their similar appearance on routine histology, which makes correct clinical diagnosis extremely challenging. However, accurate diagnosis is essential because the treatment options, responses to therapy and prognoses vary widely depending on the diagnosis. They include Ewing's family of tumors (EWS), neuroblastoma (NB), non-Hodgkin lymphoma (in our case Burkitt's lymphoma, BL) and rhabdomyosarcoma (RMS). The SRBCT data being published in \cite{khan2001classification} included the expression of 2308 genes measured on 63 samples (23 EWS, 8 BL, 12 NB and 20 RMS). This data are known as an easy-classified example and have been studied by many. \cite{lee2004multicategory} using the multicategory SVM is one of several methods that have excellent classification results on this data set. Hence, we focus more on the selected genes.

We applied our improved feature screening procedure on this dataset and compared our selection of genes with the 96 top genes reported in \cite{khan2001classification}. This is a multicategory classification and the genes were screened in a one-versus-rest fashion. Specifically, for each of the four different types of tumors, we generated a response indicator vector taking value of 0 if the sample came from the current interested type and 1 otherwise.
This allowed us to implement our screening procedure and obtained the genes which showed high distance correlation with the current type of tumor. The four groups of selected genes were combined as a whole collection of in total 176 genes. 47 genes turned out to be in common for the DCOV selection and the top 96 genes used in \cite{khan2001classification}.

We further examined the power of these two groups of genes in differentiating the 4 types of tumors by presenting the pairwise distances of the 63 samples
(\textit{Figure 1}). As shown in the plot, the samples were arranged in the order of EWS, BL, NB and RMS. The pairwise distances resulted from the two
selections of genes were scaled to maximum of 1 respectively so that they shared the same magnitude. Both groups of genes could tell the 4 types
of tumors apart. Compared with the 96 genes from \cite{khan2001classification}, however, the 176 DCOV selected genes show better distinguishability and clearer contrast over the 4 classes. Moreover, the right panel almost missed the samples labeled from 57
to 62 in the class of RMS but the 176 DCOV genes could recognize them with big differences between the in and out class pairwise distances. The dataset were known to be easy for classification and both sets of genes were able to classify the testing set of 20 samples perfectly via k-nearest neighbor method with $k=3$.


\begin{figure}
\begin{center}
  \includegraphics[width=18cm]{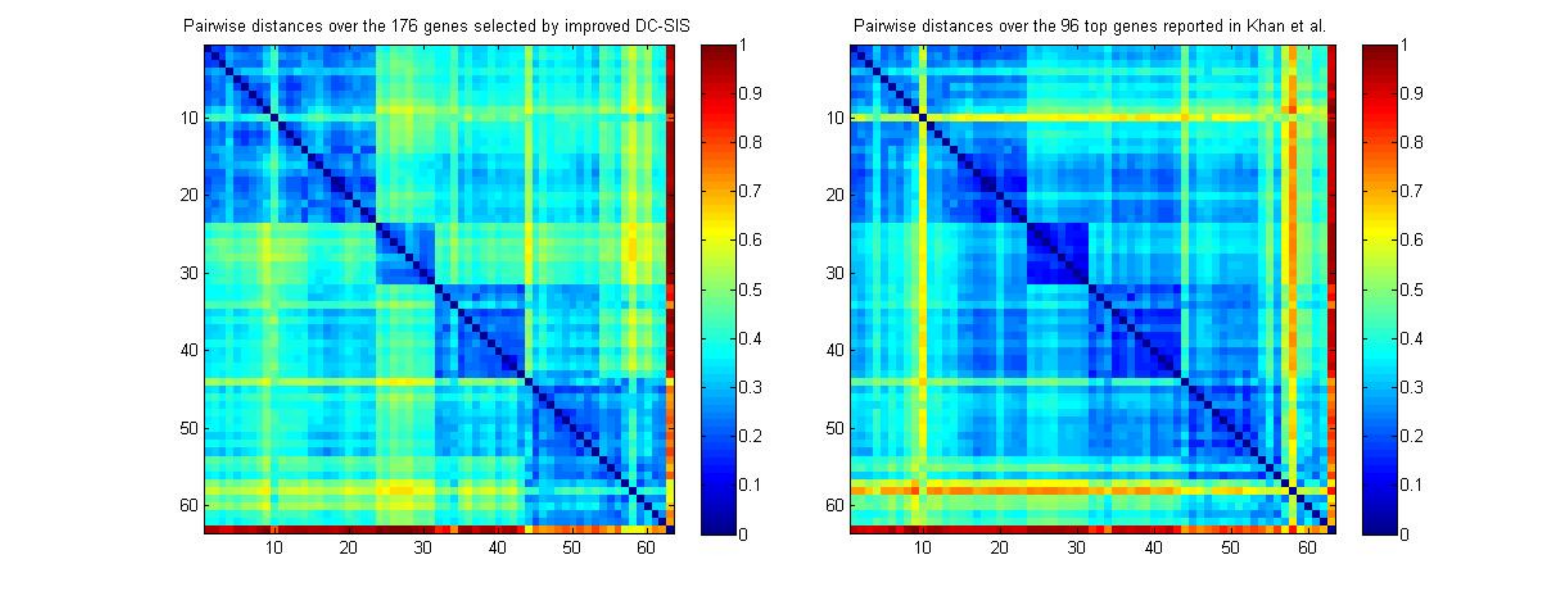}
  \caption{Comparison of pairwise distances between the two selections of genes. Left and right panel present the pairwise distances of the 63 samples over the improved DC-SIS selection of 176 genes and the 96 reported genes in \cite{khan2001classification} respectively.}
\end{center}
\end{figure}

\section*{Real application on TCGA ovarian cancer data}
\subsection*{Data description}
Ovarian cancer is the fifth-leading cause of cancer death among women in the United States; 22,240 new cases and 14,030 deaths were estimated to have occurred in 2013\cite{website,bell2011integrated}. The standard treatment for high-grade serous ovarian cancer is aggressive surgery followed by chemotherapy. Despite treatment, a vast majority of ovarian cancer patients eventually relapse and die of their disease with a major cause of chemotheraphy resistance \cite{selvanayagam2004prediction}. Identification and prediction of patients with chemoresistant thus become important for improving the outcome of ovarian cancer.

The Cancer Genome Atlas (TCGA) collected high-quality, high-dimensional, and multi-modal genetic data from women with ovarian cancer. There were 279 samples  with explicit chemostatus and gene expression (Affymetrix HT-HGU133a) data in the public set. among which 191 subjects were sensitive to chemotherapy and 88 were chemoresistant. Expression data for 12042 genes after log transformation are used for analysis. The issue is to explore whether there are genes whose expression pattern is strongly correlated with the response indicating chemotherapy status.

\subsection*{DCOV gene selection results based on all the observations}
Our feature screening procedure on the gene expression data for the 279 patients selected 82 genes, among which 5 were reported to be related to ovarian cancer in the literature. IGFBP5 ranked as the 5th is one of the six members of insulin-like growth factor-binding protein (IGFBP) family and is known to be important for cell growth control, induction of apoptosis and other IGF-stimulated signaling pathways. IGFBP5 expression is shown to prevent tumor growth and inhibited tumor vascularity in a xenograft model of human ovarian cancer and is suggested that IGFBP5 plays a role as tumor suppressor by inhibiting angiogenesis \cite{rho2008insulin}. GPR3, the 7th, is a member of a family of G-protein couple receptors whose activation of PKA and subsequent increase of cyclic AMP level promotes meiotic arrest in the oocyte \cite{mehlmann2004gs}. Mice deficient in GPR3 display premature ovarian aging and loss of fertility \cite{ledent2005premature}. MAPK4, the 18th, is a member of MAPK signaling pathway. MAPK signal transduction cascade is dysregulated in a majority of human tumors \cite{basu2009nanoparticle}. It is suggested playing an important role in molecular diagnostics and molecular therapeutics for lowgrade ovarian cancer \cite{bast2010personalizing}. FZD5 ranked as the 22th encodes Frizzled-5 protein, which is believed to be the receptor for the Wnt5A ligand \cite{thiele2011expression}. The Wnt5A ligand plays a context-dependent role in human cancers. It has been demonstrated that Wnt5a is expressed at significantly lower levels in human Epithelial ovarian cancer (EOC) cell lines and in primary human EOCs compared with either normal ovarian surface epithelium or fallopian tube epithelium \cite{bitler2011wnt5a}. FGF22, the 56th, is a member of Fibroblast Growth Factors (FGFs) family, whose members possess broad mitogenic and cell survival activities, and are involved in a variety of biological processes, including embryonic development, cell growth, morphogenesis, tissue repair, tumor growth and invasion. The inhibition of FGFR2, which is a member of this family, has been found to increase cisplatin sensitivity in ovarian cancer \cite{cole2010inhibition}.

39 pathways were found to be associated with the 82 genes, among which 3 pathways are known to be important for ovarian cancer. MAPK signaling pathway is suggested playing an important role in molecular diagnostics and molecular therapeutics for lowgrade ovarian cancer \cite{bast2010personalizing}. Wnt signaling pathway is best known for its role in tumorigenesis. \cite{bast2010personalizing} demonstrated the difference in Wnt signaling pathway between normal ovarian and cancer cell lines and between benign tissue and ovarian cancer. They also pointed out that those differences implicate that Wnt signaling leads to ovarian cancer development despite the fact that gene mutations are uncommon. \cite{yin2011genetic} suggested that genetic variants in TGF-$\beta$ signaling pathway are associated with ovarian cancer risk and may facilitate the identification of high-risk subgroups in the general population.

\subsection*{Support vector machine with reject option}
We estimated the probabilities of being chemosensitive or chemoresistant by a penalized Bernoulli likelihood main effect spline model using the \texttt{R} package \texttt{gss}\cite{gu2007gss}. Aside from the additive expression effects of the selected 82 genes, we also included two more covariates, namely the cancer grade and cancer stage of the subjects. Cancer grade is an indicator for grade 2 and grade 3. Cancer stage indicates whether the subject is in stages IIIC and IV or not. As shown in \textit{Figure 2}, the estimated probabilities have high density around small and large values for sensitive and resistant patients respectively, with overlapping in the middle values. This suggested that we were less confident about the chemostatus for the patients in the middle range and so we sought a principled approach which withholds decision for such cases.
\begin{figure}
\begin{center}
  \includegraphics[width=14cm]{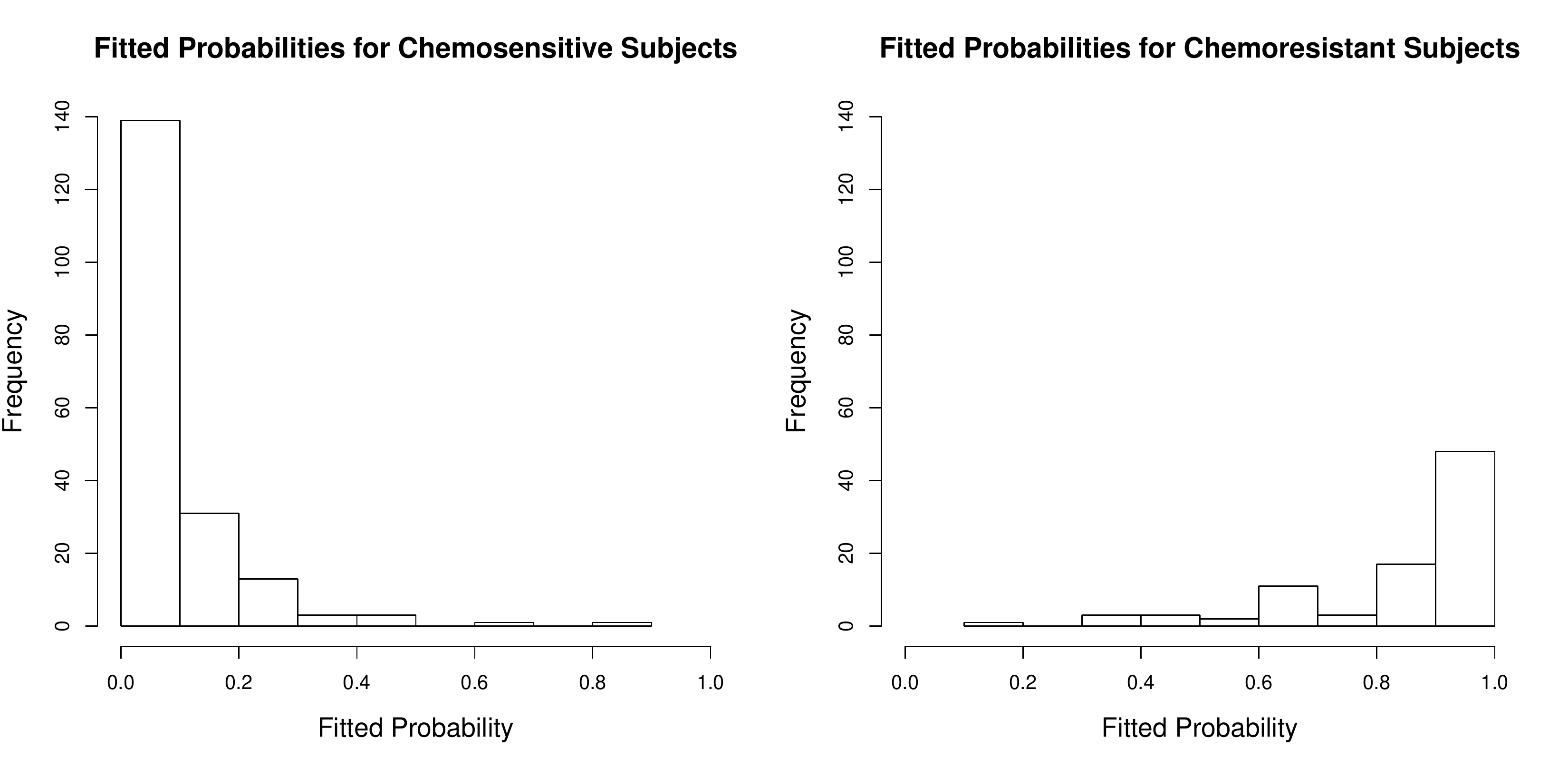}
  \caption{Fitted probabilities by penalized Bernoulli likelihood model with the 82 genes.}
\end{center}
\end{figure}

\cite{wegkamp2011support,bartlett2008classification} investigated the support vector machines with a built-in reject option for binary classification where the results of classification could be $-1, +1$ or withhold decision. Given a discriminant function $f: \mathcal{X}\rightarrow \mathbb{R}$, the method only reports $sgn(f(x))\in\{-1,1\}$ if $|f(x)| > \delta$ and withholds decision if $|f(x)|\leq \delta$. Suppose that the cost of making a wrong decision is 1 and that of rejecting to make a decision is $d\in[0,\frac{1}{2}]$, then an proper risk function is
\begin{displaymath}
L_{d,\delta}(f)=El_{d,\delta}(Yf(X)) = P\{Yf(X)<-\delta\}+dP(|Yf(X)|\leq\delta)
\end{displaymath}
with the discontinuous loss function
\begin{equation}
l_{d,\delta}(z) =
\begin{cases}
    1, & \hbox{if $z<-\delta$;} \\
    d, & \hbox{if $|z|\leq \delta$;} \\
    0, & \hbox{otherwise.}\notag
\end{cases}
\end{equation}
The classifier associated with the discriminant function
\begin{equation}
f^\ast_d(x) =
\begin{cases}
    -1, & \hbox{if $\eta(x)<d$;} \\
    0, & \hbox{if $d\leq \eta(x) \leq 1-d$;} \\
    +1, & \hbox{if $\eta(x)>1-d$,}\notag
\end{cases}
\end{equation}
with $\eta(x) = P\{Y=+1 | X=x\}$ minimizes the risk $L_{d,\delta}(f)$ with
\begin{displaymath}
L^\ast_d=L_{d,\delta}(f^\ast_d)=E\min\{\eta(X),1-\eta(X),d\}.
\end{displaymath}
To avoid working with the discontinuous loss $l_{d,\delta}$, \cite{wegkamp2011support,bartlett2008classification} proposed a convex surrogate loss, which is the generalized hinge loss,
\begin{equation}
\phi_d(z) =
\begin{cases}
    1-az, & \hbox{if $z<0$;} \\
    1-z & \hbox{if $0\leq z < 1$;} \\
    0, & \hbox{otherwise,}\notag
\end{cases}
\end{equation}
where $a = (1-d)/d\geq 1$. It followed that $f^\ast_d$ also minimizes the risk associated with $\phi_d$ over all measurable $f:\mathcal{X}\rightarrow\mathbb{R}$.

The discriminant functions $f$ took the form $f_{\lambda}(x)=\sum_{j=1}^M\lambda_jf_j(x)$ based on a set of known functions $f_j:\mathcal{X}\rightarrow\mathbb{R}$ and coefficients $\lambda_j\in\mathbb{R},1\leq j \leq M$. The coefficients were chosen to minimize the empirical risk
\begin{displaymath}
\hat{R}_{\phi}(f_\lambda) = \frac{1}{n}\sum_{i=1}^n\phi(Y_if_\lambda(X_i)).
\end{displaymath}
To reflect the preference for sparse solutions, which is desirable when $M$ is large compared to the sample size $n$, an $l_1$ type restriction $\|\lambda\|_1 =\sum_{j=1}^M|\lambda_j|$ was incorporated in \cite{wegkamp2011support} and $f_\lambda$ is estimated by $f_{\hat{\lambda}_r}$, where
\begin{equation}
\hat{\lambda}(r) = \textrm{arg}\min_{\lambda\in\mathbb{R}^M}\hat{R}_{\phi}(f_\lambda)+r\|\lambda\|_1
\end{equation}
and $r>0$ is a tuning parameter. We followed \cite{wegkamp2011support} to call this model support vector machines with reject option(SVM-R).

The authors in \cite{bartlett2008classification} also showed that the choice of $\delta = 1/2$ gives the optimal bound established by the excess risk of $\phi_d$ on the excess risk $L_{d,\delta}-L^\ast_d$ for any fixed $d\in [0,1/2)$ and measurable function $f$. For this reason, we fixed $\delta = 1/2$ for our practical use of the method. Furthermore, we took the set of known functions $f_j:\mathcal{X}\rightarrow\mathbb{R}$ to be linear functions of the log transformation on the 12024 genes. The optimization problem $(2)$ was formulated into a linear programming task and solved using \textsc{MATLAB}.

\textit{Figure 3} presents the 82 genes for the 279 subjects in groups according to the SVM-R classification results. The results correspond to the particular choice of $d = 1/4$ and $r = 4$ to illustrate the benefits of SVM-R. As shown in the plot, there is a big difference in the gene
expression between the subjects assigned to be resistant and sensitive. The behavior of the 82 genes for those without a certain decision tends to be in-between.

\begin{center}
\begin{figure}[h]
  \includegraphics[width=18cm]{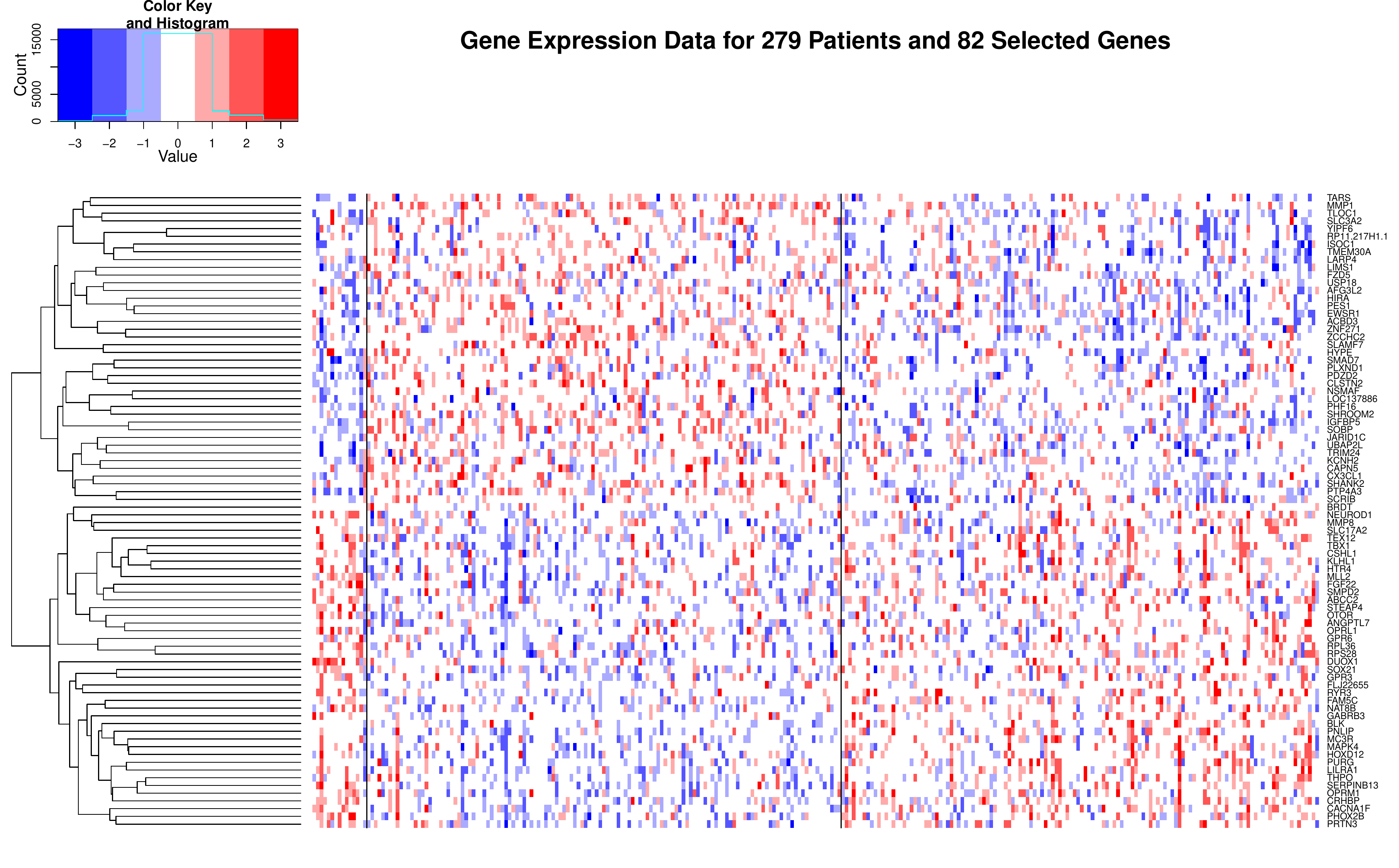}
  \caption{Gene expression data for the 82 selected genes and 279 subjects with SVM-R classification for $d = 1/4$ and $r = 4$. The subjects are grouped according to their assigned decisions by the SVM with a reject option. The left group involves 15 patients (1 sensitive and 14 resistant) classified to be resistant. The middle group is assigned to be sensitive and contains 123 sensitive and 8 resistant subjects. 67 sensitive patients and 66 resistant patients with a withhold decision are shown in the right group.}
\end{figure}
\end{center}

\subsection*{Five fold cross validation}
In order to choose the tuning parameter in SVM-R, we need to hold aside a tuning set before selecting the genes. Leaving out different observations leads to different gene selection results. Here we applied a five fold cross validation analysis to examine the variations of selections of genes and SVM-R model fitting results across different partitions of the dataset. The implementation followed the steps below.
\begin{enumerate}
\item Randomly partition the 279 subjects into 5 non-overlapping folds.
\item Select genes from the 12024 genes based on 4 folds as the training set.
\item Build SVM-R model with the selected genes and the two cancer status variables based on the training set.
\item Use the leaving-out fold as the tuning set to choose the tuning
  parameter for SVM-R with mean $l-$loss, defined below, as the criteria.
\item Repeat $2.-4.$ for the 5 folds.
\end{enumerate}

The $l-$loss for a subject is $1$ if a misclassification occurs, $d$
if a withholding decision is made and $0$ otherwise. The mean $l-$loss
is the average over the $l-$losses for all the subjects in a given set
of data. We looked for tuning parameter values minimizing the mean $l-$loss.

The above procedure produced 5 selections of genes before SVM-R,
namely $S_1,\ldots,S_5$. In addition, we also have the 82 genes
selected from all the subjects previously. Table \ref{t1}
presents the pairwise intersections of these 6 sets with each
other. The union of the 5 selections includes 211 genes, which covers
77 genes in the 82 genes. 73 out of 211 genes have frequency more than
1 where 63 of them appear in the 82 genes. After implementing SVM-R,
the union of genes is reduced to 98 genes. \textit{Figure 4} displays
the histogram of these 211 genes colored by the frequency after SVM-R
runs for $d=1/5$. The pink region denotes the parts further eliminated by SVM-R, which is consistent with the DCOV selection in that SVM-R further rules out the genes with low frequency in the union.                                                                                                                           \begin{table}
\centering
\begin{tabular}{c|ccccc}
\toprule
   & $S_1$ & $S_2$ & $S_3$ & $S_4$ & $S_5$  \\
    \hline
  $S_1$ & 53 &  &  &  &  \\
  $S_2$ & 16 & 77 &  &  &  \\
  $S_3$ & 23 & 21 & 87 & &  \\
  $S_4$ & 18 & 16 & 15 & 33 &  \\
  $S_5$ & 27 & 30 & 31 & 21 & 94 \\
  \hline
  82 genes & 38 & 38 & 44 & 28 & 50 \\
\bottomrule
\end{tabular}
\caption{Pairwise intersections of $S_1,\ldots,S_5$ and the 82 genes. The diagonal numbers are the numbers of selected genes in each $S_i$.}
\label{t1}
\end{table}

\begin{figure}
\begin{center}
  \includegraphics[width=10cm]{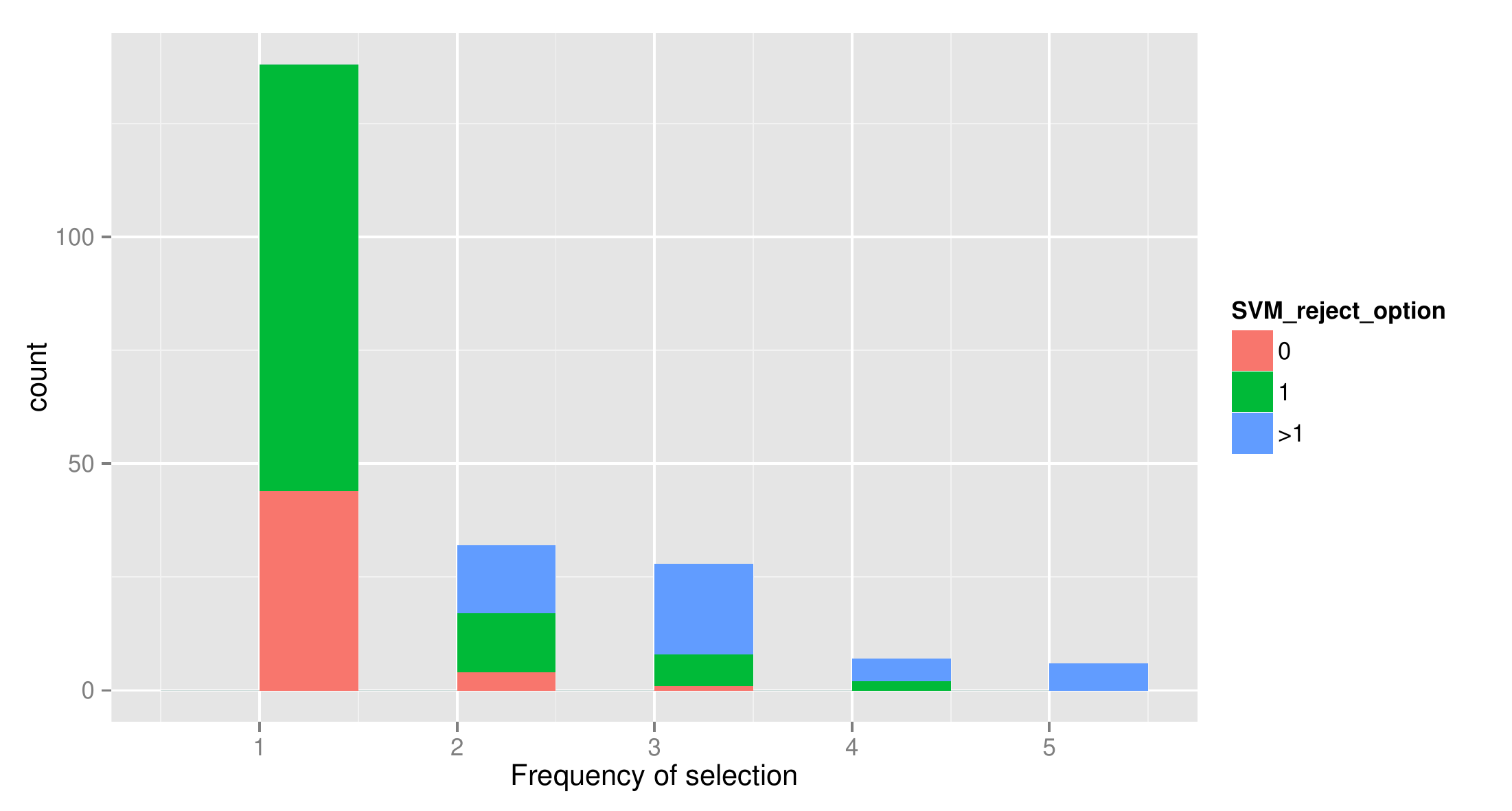}
  \caption{Frequency for the union of $S1,\cdots, S5$, colored by
    frequencies after SVM-R for $d=1/5$.}
\end{center}
\end{figure}

\subsection*{Multiple cross validation}
In order to consider uncertainty in variable selection and model
building due to different partitions of the dataset, we further
extended the five fold cross validation to multiple cross
validation (MCV) and assessed the prediction power through the
following procedure. The results were summarized in the upper part of
Table \ref{t2}.
\begin{enumerate}
\item Randomly partition 279 samples into a 1/5 tuning set, a $2/3\times 4/5 = 8/15$ training set and a $1/3\times 4/5 = 4/15$ testing set.
\item Select genes from the 12024 genes using the proposed method on the training set.
\item Build SVM-R model using the selected genes and the two cancer status variables based on the training set.
\item Use the tuning set to choose the tuning parameter for SVM-R with
  mean $l-$loss as the criteria.
\item Use the model with chosen parameter to predict labels for the testing set.
\item Repeat $1.-5.$ for 50 times.
\end{enumerate}

\begin{table}
  \centering
  \begin{tabular}{l | c |  c c c c c }
    \toprule
    &   & num of reps & mean training & mean testing & mean
    num training & mean num test \\
    & $d$ & with decision &  accuracy(std) & accuracy(std) &with
    decision(std) & with decision(std) \\
    \hline
    \multirow{3}{*}{original}
    &$1/3$& 50 & 0.8319(0.0914) & 0.6943(0.0544) & 101.9400(27.8043) & 49.1800(15.4295) \\
    &$1/4$& 43 & 0.9371(0.0336) & 0.7807(0.1250) & 43.0698(27.0338)  & 20.1860(12.9638)\\
    &$1/5$& 37 & 0.9420(0.0358) & 0.8215(0.1460) & 24.4595(21.3874) & 11.4865(10.0626)\\
    \hline
    \multirow{3}{*}{permute}
    &$1/3$& 49 & 0.7984(0.0078) & 0.6910(0.0426) & 112.1837(26.4352) & 55.5918(13.9954) \\
    &$1/4$& 28 & 0.9225(0.0023) & 0.6867(0.0810) & 56.9643(21.4346)  & 25.2857(10.4132)\\
    &$1/5$& 9  & 0.9686(0.0015) & 0.7071(0.1322) & 53.4444(28.3333) & 24.5556(13.0682)\\
    \bottomrule
  \end{tabular}
\caption{Results for the 50 individual replications for $d = 1/3, 1/4$
  and $1/5$. The upper and lower part are results for the original and
  permuted data respectively. The third column shows the number of
  replications out of 50 with at least one definite decision made on
  the testing set. The fourth and fifth columns of the
  table conclude the mean training and testing accuracies with
  standard deviation in the parenthesis respectively
  restricted to the repetitions with decision made. The last two
  columns display the mean and standard deviation for the number of
  patients assigned decisions for the training and testing sets
  respectively given the replications with decision made.}
\label{t2}
\end{table}

To understand more about the 50 models, we further explored the prediction
results for $d = 1/5$. The prediction labels from the 50 models were
aggregated, following the idea of ensemble methods. The result for
each individual was recorded in a vector of three frequencies, namely
the frequency of being classified as sensitive subjects, the frequency
of obtaining a withholding and the frequency of being assigned to be
resistant out of the 50 models. Let $(s_i, w_i, r_i)$ be the vector for the $i$th patient.

A finer analysis was conducted by looking at the strength of being
sensitive or resistant according to $(s_i, w_i, r_i).$ A voting score
$v_i$ was defined as $(s_i - r_i) / w_i$. Hence, a positive $v_i$
indicated a tendency of being sensitive whereas a negative $v_i$
suggested more possibility of being resistant.

Table \ref{t3} (upper part) partitions the voting scores
into 5 intervals and describes the distribution of $v_i$'s as well as
the proportion of sensitive subjects within each range of $v_i$,
compared to the overall proportion of sensitive patients,
i.e. $191/279$, in the population. It turned out that the trend of
being sensitive weakened monotonically as the voting score decreased. The
stratification specified a subgroup of 15 patients, who possessed the
greatest voting scores, with very high accuracy to be
chemosensitive. The next highest voting score subgroup of 47 subjects
also showed strong confidence of being sensitive compared to the
sample proportion. The conclusion from partitioning the voting scores
was conservative but led to more convincing and steady classification results.

\begin{table}[h]
  \centering
  \begin{tabular}{l c c c c c c c}
    \toprule
    & & &\multicolumn{4}{c}{voting score}\\
    \cmidrule(l){4-8}
    & & &$(-0.1,0]$ & $(0,0.1]$ & $(0.1,0.2]$ & $(0.2,0.4]$ & $(0.4,1.5]$ \\
    \hline
    \multirow{2}{*}{original}
    & \vline &frequency& 76 & 74 & 67 & 47 & 15\\
    &\vline &proportion& 0.5658 & 0.6486 & 0.7164 & 0.8085 & 0.9333 \\
    \hline
    \multirow{2}{*}{permuted}
    &\vline &frequency& 145 & 67 & 43 & 24 & 0\\
    &\vline &proportion& 0.6759 & 0.6866 & 0.7209 & 0.6667 & NA \\
    \bottomrule
 \end{tabular}
\caption{
Frequency of voting score $v_i$'s and proportion of sensitive subjects in each subinterval for $d = 1/5$. The upper and lower parts correspond to the original and permuted data respectively.}
\label{t3}
\end{table}

Each replication of the 50 multiple cross validations gave rise to a different collection of selected genes. This issue is common to selecting
variables from a humongous number of candidates, in the not-low-hanging-fruit situation. The union of the 50 gene selections before SVM-R consisted of 1245 genes
and included all the 82 genes discussed previously. 34 out of 1245
genes were chosen at least 10 times, where 33 of them appeared in the
82 genes, but very few appeared in more than 25 runs. The $l_1$
penalty provided additional elimination, and for $d = 1/5$, 498 out of 1245 genes remained after SVM-R runs. \textit{Figure 5} displays the histogram for the 1245 genes before SVM-R. We distinguished their frequency after SVM-R by different colors. It is shown that a large number of genes with low frequency are further deleted by SVM-R model, i.e. pink color.

\begin{figure}
\begin{center}
  \includegraphics[width=12cm]{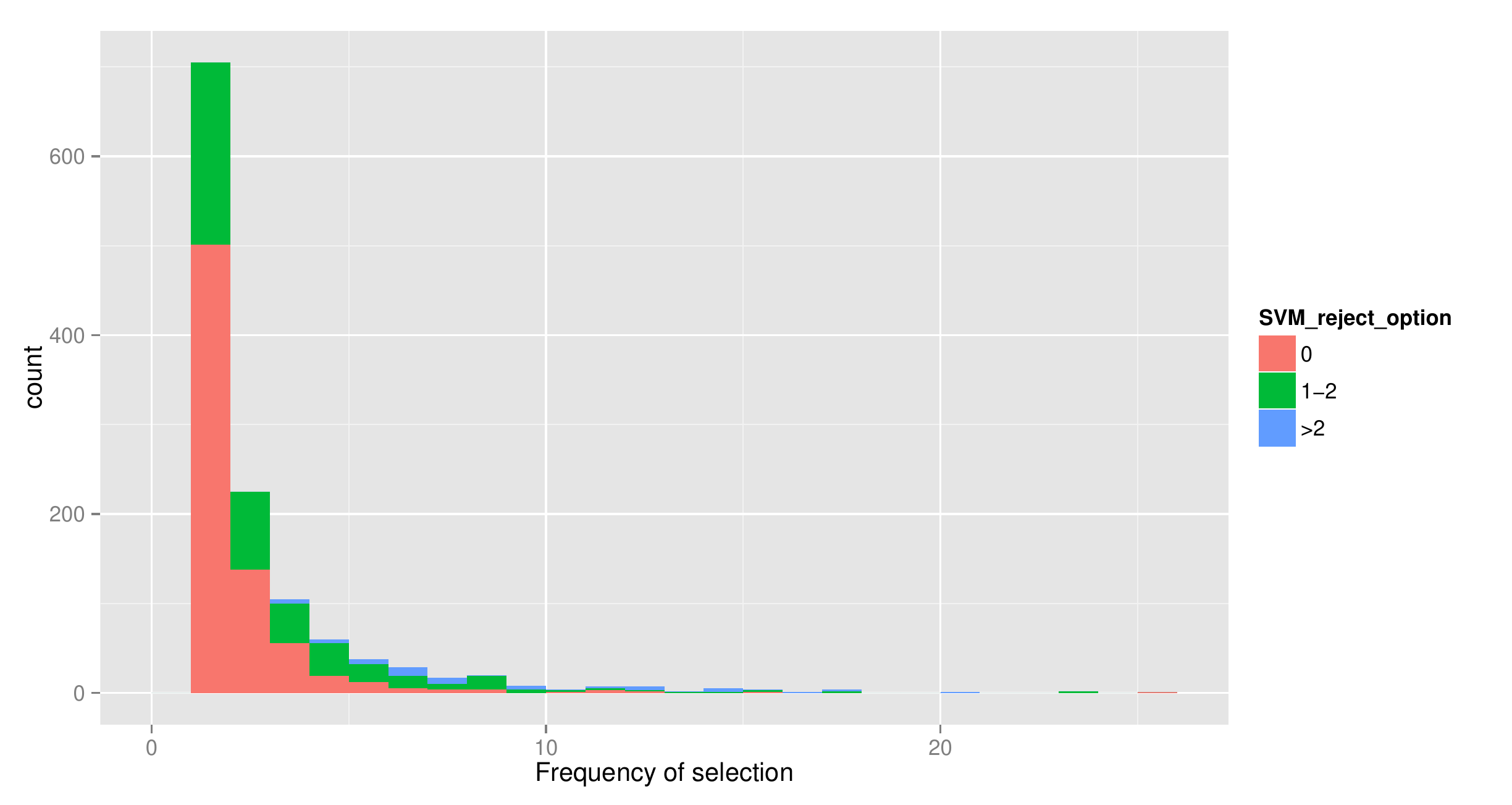}
  \caption{Frequency for 1245 genes being selected by DCOV method,
    colored by frequencies after SVM-R for $d=1/5$.}
\end{center}
\end{figure}

\subsection*{Permutation of the response}
Our method involved several components, including variable selection,
SVM-R, MCV and the voting score, which were interacting with each
other, and led to 15 patients with over $93\%$ accuracy to be sensitive for $d = 1/5$. To further understand the mechanism and to demonstrate that the outcomes were not produced by noises, we randomly permuted the response and went through the whole procedure to compare the results with those for the original data.

It followed that the DCOV method selected genes spuriously correlated
with the permuted response based on the training data in each
replication of the 50 MCVs. The maximum distance correlation value of
the selected genes in each repetition was very close to that for the
original data. The highly correlated genes appeared due to the high
dimensionality of over 12000 genes and less than 200 training
samples.

However, the MCV step played the role of a safeguard against
the fake signals. As Table \ref{t2} depicted, the mean
training accuracies for the original and permuted data showed similar
behavior for the original and permuted data, meaning that the selected
genes were indeed important for the training data. Thus, the chosen
genes in the permutation set should provide little prediction power for the tuning and
testing data. Hence, the validation sets
selected large tuning parameter values driving all the patients with
no decision for many of the 50 replications for $d=1/4$ and
$1/5$. This did not happen for $d=1/3$ since the sample ratio
$191/279$ is slightly greater than $1/3$. For the rest of the replications
with decision making, the mean testing accuracies for the permuted data
remained at the level of the sample proportion of sensitive subjects
for all three values of $d$, which deviated much from the increasing
pattern in the original data.

These suggested that the MCV procedure was able to provide double
fail-secure for fake signals. On one hand, SVM-R placed a cap on the
conditional probability of misclassfication and eliminated the
replications where the selected genes could not produce results
achieving the specified confidence on the validation set. On the other hand, the
mean testing accuracies on the replications passing through the
safeguard of tuning sets would be no better than assigning everyone to
the sensitive class when there was no real signal.

Furthermore, the poor prediction performance of the 50 individual
models ended up with unsurprisingly disappointing voting score results
for the permuted data, as shown in the lower part of Table
\ref{t3}. Many of the patients obtained a relatively small
value of the voting score and nobody got a score in the range where
the original data had the highest accuracy, meaning that the confidence was quite
low. Moreover, the stratified ratios of sensitive subjects for different
ranges of the voting scores did not show anything insightful other than
being around the sample proportion.

\section*{Discussion}
The paper introduced a new variable selection procedure based on the property of distance covariance and demonstrated the application through two examples. The small round blue cell tumors data played a role of a toy example to show that the performance of the proposed method worked well in easy cases. The TCGA ovarian cancer data, however, were much more challenging to deal with due to the humongous number of variables and very limited sample size. The uncertainty of variable selection was discussed through gene selection results using random subsets of the data. The support vector machine with reject option was used to withhold decision for subjects who were difficult to classify. An ensemble method of combining models built on random subsets of the data was implemented to assess the prediction performance. Although we had applied these tools (DCOV, SVM-R, MCV) to biomedical data in the paper, we argue that they are quite portable across
disciplines.

As shown in Table \ref{t3}, a small portion of the model building population got classified for $d = 1/5$. Is it worthwhile to attempt the classification in such cases? It depends on the application, for example differential costs of two types of misclassification, and subjective considerations including quality of life influenced by the treatment, therapy expense and expected survival time.

Both the analysis of five fold cross validation and multiple cross validation showed the uncertainty of gene selection results based on different subsets of the data. The large number of variables that appeared only in a small number of runs suggested noises in the data and the difficulty caused by limited training sample size in the high dimensional scenario. It could also suggest the conundrum that the ``true'' model consists of a large number of variables with modest effects of which different subsets gives rise to roughly equal prediction ability. Options for further study in this and other difficult problems include allowing the DCOV stopping criteria to be modified by some amount $\epsilon$, and allowing the greedy variable selection algorithm to be doubly greedy by testing the next best $m$ of the remaining variables rather than just the next variable. It remains to obtain theoretical results to guide exploration in alternate scenarios.

The analysis of random permutation on the response served as both a
validation of our results and a discussion of what one is likely to
obtain without any true signal. If someone started with an entirely
different data set having the same proportions for the two classes with
that in the original data but no real signal at all, as what one might
get from scrambling, and went through every step, and finally obtained
a subgroup of patients with large voting score values, the result was no better than just guessing that everyone was sensitive. This experiment was also a cautionary tale that if one had not held out validation sets, the analyst could be easily fooled by spurious correlated variables and perfect training accuracy. Our proposed multiple cross validation and analysis through the voting scores provided protection against finding fake signals.

\end{document}